\title{\setstretch{1} Some Finite Sample Properties of the Sign Test\footnote{I am grateful to Ivan Canay, Joel Horowitz and Azeem Shaikh for guidance and helpful comments.}}
\author{Yong Cai\footnote{Becker Friedman Institute, University of Chicago. E-mail: yongcai@uchicago.edu}
}
 \date{\parbox{\linewidth}{\centering%
  \today\endgraf}} 
\theoremstyle{definition}
\newtheorem{theorem}{Theorem}
\newtheorem{proposition}{Proposition}
\newtheorem{corollary}{Corollary}
\newtheorem{assumption}{Assumption}
\newtheorem{remark}{Remark}
\begin{document}
\maketitle
\begin{abstract} \setstretch{1}\noindent
This paper contains two finite-sample results concerning the sign test. First, we show that the sign-test is unbiased with independent, non-identically distributed data for both one-sided and two-sided hypotheses. The proof for the two-sided case is based on a novel argument that relates the derivatives of the power function to a regular bipartite graph. Unbiasedness then follows from the existence of perfect matchings on such graphs. Second, we provide a simple theoretical counterexample to show that the sign test over-rejects when the data exhibits correlation. Our results are useful for understanding the properties of approximate randomization tests in settings with few clusters. 

\end{abstract}


\section{Introduction}

In recent years, randomization tests have become increasingly popular as a method for inference, due in no small part to the fact that they control size exactly even in finite sample. For example, \cite{ai2017} and \cite{young2019} advocate their in policy evaluation for this reason. Exactness has also been useful in settings with a small number of ``effective observations". Here, 
tests are set up to be asymptotically equivalent to randomization tests with a finite number of observations that are independently but not identically distributed. Exactness of randomization tests then yields size control. Such as approach has been fruitful in for inference for cluster-dependent data with few clusters \citep{crs2017, css2019} and specification tests for RDD involving few units near the cut-off (\citealt{canay2018approximate, bugni2021testing}), among others. 

In the above examples, randomization tests with a fixed number of observations are often used directly or indirectly for inference. It is therefore useful to understand their finite sample properties. \cite{hoeffding1952} showed that randomization tests control size in this setting, provided that the randomization hypothesis holds -- that is, that the data satisfies certain symmetries. The power of randomization tests is less well documented, particularly when observations are not identically distributed. It is challenging to analyze the power of randomization tests in finite sample due to their combinatorial nature. However, allowing for non-identical observations is important. In \cite{crs2017}, each observation corresponds to a cluster, so accommodating non-identical observations is necessary to allow for non-identical clusters, a keystone of the clustering literature. Additionally, it is useful to assess the sensitivity of randomization tests to violations of the randomization hypothesis. In the clustering context, researchers rarely know the cluster structure ex ante. Misspecify the cluster structure may lead to (approximate) randomization tests with dependence between observations. Understanding how randomization tests behave under violations of independence is thus of relevance to applied researchers. 

In this paper, we attempt to make headway on the above issues by focusing on the sign test, which is a test for the median of a collection of random variables. If these random variables are symmetrically distributed, it is also a test for the mean. In addition to being an especially tractable randomization test, the sign test is also relevant given its use in financial econometrics to test for abnormal returns in event studies (\citealt{corrado1992specification}), in RDD to test the continuity of the running variable (\citealt{bugni2021testing}), or to test for the level of clustering in computing clustered standard error (\citealt{cai2023modified}), among other applications.

We document two properties of the sign test in finite samples. Our first result shows that when observations are independent but not identically distributed, the sign test is unbiased against one and two-sided alternatives. Our proof for the two sided case is based on a novel argument that relates the derivative of the power function to a regular bipartite graph. Hall's Marriage Theorem, which implies the existence of a perfect matching for regular bipartite graphs, then yields unbiasedness.  Our second result shows that sign tests with correlated normal random variables strictly over-rejects. This provides a clear counterexample, showing that it is important to account for correlation within units when conducting sign tests and randomization tests more generally. 

This paper complements the literature on the properties of the sign test. It is well-known since \cite{hoeffding1952} that the sign test for the median has exact size in finite sample, even when observations non-identically distributed. On the issue of finite sample power, under the assumption that observations are identically distributed, the sign test is known to be the uniformly most powerful unbiased test for one- and two-sided hypotheses about the median (see Section 4.9 and Theorem 4.4.1 in \citealt{lr2005}). \cite{dixon1953power} computes the power of the sign test under alternatives which are identically, normally distributed. However, to our knowledge, there does not exist power results once we allow for observations that are not identically distributed. In particular, it is not clear how the standard approach generalizes to this case (see Remark \ref{remark--standard_approach}). Our paper provides a novel argument to show unbiasedness. On the issue of dependence, \cite{gastwirth1971effect} shows that the sign test over-rejects asymptotically when observations exhibit first order auto-correlation. Our counterexample, based on equicorrelated normal random variables, demonstrates the same phenomenon simply in the finite sample setting.  

The rest of this paper is organized as follows. Section 2 presents notation and describes the sign test. Section 3 contains our main results. Section 4 concludes. All proofs are contained in the appendix. 

\section{Set up and Notation}\label{section--setup}

Let $\mu = \text{Med}(X_1) = ... = \text{Med}(X_q)$. 

\subsubsection*{Two-Sided Test}

We want to test $$H_0: \mu = \mu_0 \quad \text{against} \quad H_A: \mu \neq \mu_0~.$$ First define $Y = \left(X_1 - \mu_0, ..., X_q - \mu_0\right)'$. Next, define the two-sided test statistic
\begin{equation}\label{equation--twosidedteststat}
	T(Y) = \left\lvert \sum_{i=1}^q \text{sgn}\left\{ Y_i\right\} \right\rvert~,
\end{equation}
Next, denote by $\mathbf{G}$ the set of sign changes. $\mathbf{G}$ can be identified with the set of $g \in \{-1, 1\}^{q}$ so that $gY = \left(g_1Y_1, ... ,g_qY_q\right)'$. The two-sided sign test $\phi_{2,n}$ is the randomization test that rejects the null hypothesis when $T(Y)$ takes on extreme value relative to $T(gY)$. It proceeds as follows. 

Define $M = |\mathbf{G}|$ and let:
\begin{equation*}
T^{(1)}(Y) \leq T^{(2)}(Y) \leq ... \leq T^{M}(Y)
\end{equation*}
be the ordered values of $T(gY)$ as $g$ varies in $\mathbf{G}$. For a fixed nominal level $\alpha$, let $k$ be defined as
\begin{equation*}
k = \lceil (1-\alpha) M \rceil~,
\end{equation*}
where $\lceil x \rceil$ denotes the smallest integer greater or equal to $x$. In addition, define:
\begin{align*}
M^+(Y) & = \sum_{j=1}^{M} I\{T^{(j)}(Y) > T^{(k)}(Y)\} \\
M^0(Y) & = \sum_{j=1}^{M} I\{T^{(j)}(Y) = T^{(k)}(Y)  \}
\end{align*}
and set
\begin{equation*}
a(Y) = \frac{M\alpha - M^+(Y)}{M^0(Y)}~.
\end{equation*}
We can then define the two-sided sign test as:
\begin{align} \label{test_standard}
\phi_{2,n}(Y) = \begin{cases}
1 & \mbox{ if }T(Y) > T^{(k)}(Y), \\
a(Y) & \mbox{ if }T(Y) = T^{(k)}(Y), \\
0 & \mbox{ otherwise.}
\end{cases}
\end{align}
In words, this test rejects the null hypothesis with certainty when $T(Y) > T^{(k)}(Y)$. When $T(Y) = T^{(k)}(Y)$, it rejects the null hypothesis with probability $a(Y)$. The test does not reject when $T(Y) < T^{(k)}(Y)$. Suppose $\{Y_i\}_{i=1}^q$ do not have point mass at $\mu$. Then under the null hypothesis, $\text{sgn}\{Y_i\}$ is a Rademacher random variable and
$$\text{Unif}\left(T^{(1)}(Y), T^{(2)}(Y), ..., T^{(M)}(Y)\right) \, \bigg\lvert \, Y \sim \text{Binomial}\left(q, \frac{1}{2}\right)~.$$
As such, the critical value for our test can equivalently be written in terms of the quantiles of the $\text{Binomial}(q, \frac{1}{2})$ distribution. 

\subsubsection*{One-Sided Test}

Suppose we are instead interested in the hypothesis:
$$H_0: \mu = \mu_0 \quad \text{against} \quad H_A: \mu \geq \mu_0~.$$ 
We perform the one-sided sign test with the test statistic: 
\begin{equation}
	S(Y) = \sum_{i=1}^q \text{sgn}\left\{ Y_i\right\} ~.
\end{equation}
The rest of the testing procedure is identical to that of the two-sided test, except with $S(Y)$ taking the place of $T(Y)$. Let the corresponding test by $\phi_{1,n}$. 

In the following sections, $E_\mu[\,\cdot\,]$ denotes expectation evaluated with respect to the data-generating process with median $\mu$. 

\section{Results}

\subsection{Unbiasedness of the Sign Test}
In this subsection, we show that the sign test is unbiased against two-sided alternatives in finite sample. Our result implies that the approximate sign test is asymptotically unbiased. 
\begin{assumption} \label{assumption--independence}
Let $\{Y_i\}_{i=1}^q$ be independent, not necessarily identical random variables with median $\mu$.
\end{assumption}
It is well known that the sign test (and randomization tests more generally) has exact $p$-values in finite sample: 
\begin{theorem}[\citealt{hoeffding1952}]\label{theorem--exactness}
	Given assumption \ref{assumption--independence}, suppose $\mu = \mu_0$. Then $E_{\mu_0}[\phi_n(Y)] = \alpha$. 
\end{theorem}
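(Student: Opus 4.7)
The plan is to establish the standard randomization-test identity, which relies on two ingredients: the invariance of the distribution of $Y$ under the group $\mathbf{G}$ of sign changes, and a deterministic accounting of how $\phi_n$ distributes rejection mass across an orbit. The proof should be a direct application of these ideas, so the work is really just checking that the construction in (\ref{test_standard}) integrates to exactly $\alpha$.

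First I would establish the randomization hypothesis. Under $H_0$ we have $\mu = \mu_0$, so each $Y_i = X_i - \mu_0$ is symmetric around $0$, giving $-Y_i \stackrel{d}{=} Y_i$. Combined with independence across $i$, this yields $gY \stackrel{d}{=} Y$ for every fixed $g \in \mathbf{G}$. Consequently $E_{\mu_0}[\phi_n(gY)] = E_{\mu_0}[\phi_n(Y)]$ for each $g$, provided $\phi_n$ is a (measurable) function of $Y$ alone, which it is.

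Next I would verify the pointwise identity
\begin{equation*}
\frac{1}{M}\sum_{g \in \mathbf{G}} \phi_n(gY) = \alpha \quad \text{for every } Y.
\end{equation*}
The key observation is that the multiset $\{T(gY) : g \in \mathbf{G}\}$ is invariant under relabeling by $\mathbf{G}$, so $T^{(k)}(gY) = T^{(k)}(Y)$, $M^+(gY) = M^+(Y)$, $M^0(gY) = M^0(Y)$, and hence $a(gY) = a(Y)$ for every $g$. Partitioning $\mathbf{G}$ according to whether $T(gY)$ exceeds, equals, or falls below $T^{(k)}(Y)$, the three cases in the definition of $\phi_n$ give
\begin{equation*}
\sum_{g \in \mathbf{G}} \phi_n(gY) = M^+(Y) \cdot 1 + M^0(Y) \cdot a(Y) + 0 = M^+(Y) + \bigl(M\alpha - M^+(Y)\bigr) = M\alpha,
\end{equation*}
which yields the claimed identity.

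Finally I would combine the two pieces: by invariance applied to each term, and then linearity of expectation,
\begin{equation*}
E_{\mu_0}[\phi_n(Y)] = \frac{1}{M}\sum_{g \in \mathbf{G}} E_{\mu_0}[\phi_n(gY)] = E_{\mu_0}\!\left[\frac{1}{M}\sum_{g \in \mathbf{G}} \phi_n(gY)\right] = \alpha.
\end{equation*}
The only subtle step is the first one; I expect the main obstacle to be stating the invariance cleanly, since it is easy to conflate the randomness of $Y$ with the deterministic indexing by $g \in \mathbf{G}$. Once invariance is pinned down, the rest is just checking that the randomization on ties in (\ref{test_standard}) was calibrated precisely so that the orbit sum equals $M\alpha$.
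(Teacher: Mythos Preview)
Your argument is correct and is precisely the standard Hoeffding (1952) proof for randomization tests: distributional invariance under $\mathbf{G}$ plus the deterministic orbit-sum identity $\sum_{g}\phi_n(gY)=M\alpha$. The paper itself does not give a proof of this theorem---it simply states the result and attributes it to \cite{hoeffding1952}---so there is nothing to compare against; your write-up supplies exactly the argument the paper omits.
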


It is also known that the sign test against one-sided alternatives is unbiased even with random variables that are not identically distributed (see Lemma 5.9.1 in \cite{lr2005} and also Appendix \ref{appendix--onesided}). Compared to one-sided alternatives, tests against two-sided alternatives are arguably more relevant, given the state of applied econometrics today. Our unbiasedness result for two-sided alternatives states that:

\begin{proposition}\label{theorem--unbiased}
	Given Assumption \ref{assumption--independence}, suppose $\mu \neq \mu_0$. Then,
	\begin{itemize}
		\item[(a)] $E_{\mu}[\phi_{2,n}(Y)] \geq \alpha$.
		\item[(b)] If $\alpha < 1$ and at least one of $\{Y_i\}_{i=1}^n$ is continuously distributed around $\mu$, the inequality in (a) is strict. 
		\item[(c)] Furthermore, as $|\mu - \mu_0| \to \infty$, $E_\mu[\phi_{2,n}(Y)] \to 1$. 
	\end{itemize}
	The same results hold for $\phi_{1,n}$ if $\mu > \mu_0$.
\end{proposition}
To show that $E_{\mu}[\phi_{t,n}(Y)] > \alpha$, we need to avoid the case in which $P_\mu(Y_i \leq \mu_0) = P_\mu(Y_i > \mu_0)$ for all $i$. This might occur if for instance all the random variables have zero density in a neighbourhood of $\mu$. This definitely cannot happen for continuous random variable, as is the case with approximate sign tests. Note that power cannot be 1 for any finite parameter value. This is because we are conducting inference using finite number of ``effective observations".

Unbiasedness in the one-sided test follows straightforwardly from monotonicity of the test statistic in $\mu_0$. Our argument is similar in spirit to Lemma 5.9.1 in \cite{lr2005}, which concerns the permutation test for two-sample comparison of means. Our argument for the two-sided test is based takes a novel approach that may be of interest. It relates derivatives of the power function to a regular bipartite graph. On this graph, each node represents a term in the derivative and an edge represents an inequality relationship. A perfect matching, which exists by Hall's Marriage Theorem, then allows us to sign the derivative. 

\begin{remark}\label{remark--standard_approach}
	When observations are identically distributed, the approach of \cite{lr2005} to prove unbiasedness proceeds by showing that the distribution of signs under is a member of the one-parameter exponential family. The sign test can then be shown to be the uniformly most powerful test, which in turn implies unbiasedness. This argument is presented in Section 4.9, drawing on Theorems 4.4.1 and 3.7.1. When observations are allowed to be non-identically distributed, the distribution of signs under the null hypothesis is not a member of the one-parameter exponential family. In particular, under a given alternative hypothesis, each random variable $\text{sgn}\{Y_i\}$ will be $1$ with a different probability $p_i$. It is therefore not clear how to adapt this approach to the non-identical distributions. 
\end{remark}


The corollary below states that an approximate randomization test is unbiased as well. It follows immediately by the continuous mapping theorem.  
\begin{corollary}
	Suppose we have variables $S_n \to \text{N}(\mu\cdot \iota, D)$, where $D$ is a $q \times q$ diagonal matrix and $\iota$ is the $q$-vector of 1's. 
	Then for $t \in \{1,2\}$, 
	\begin{itemize}
		\item[(a)] If $\alpha < 1$, $\lim_{n \to \infty }E[\phi_{t,n}(S_n)] > \alpha$
		\item[(b)] As $|\mu - \mu_0| \to \infty$, $\lim_{n \to \infty} E_\mu[\phi_{t,n}(S_n)] \to 1$
	\end{itemize}
\end{corollary}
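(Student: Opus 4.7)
The plan is to invoke the continuous mapping theorem to pass from Theorem \ref{theorem--unbiased}, applied to the Gaussian limit, back to the stated limiting statement about $\phi_n(S_n)$, exactly as the excerpt suggests. Write $Y^* \sim N(\mu\iota, D)$ for the weak limit of $S_n$. Because $D$ is diagonal, the coordinates $Y_1^*, \dots, Y_q^*$ are independent, each being $N(\mu, D_{ii})$ (or the point mass at $\mu$ if $D_{ii}=0$), and each is symmetric about $\mu$. Hence $Y^*$ satisfies Assumption \ref{assumption--independence}. Provided at least one $D_{ii}$ is strictly positive, that coordinate is continuous around $\mu$, and the non-triviality hypothesis in Theorem \ref{theorem--unbiased}(b) is met as well.

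Second, I would verify that $\phi_n$, viewed as a fixed map $\mathbb{R}^q \to [0,1]$, is continuous at $Y^*$ with probability one. Let $U = \{y \in \mathbb{R}^q : y_i \neq 0 \text{ for all } i\}$. On $U$, the map $y \mapsto T(gy) = \bigl|\sum_i \text{sgn}(g_i y_i)\bigr|$ is locally constant for every $g \in \mathbf{G}$; consequently the order statistics $T^{(j)}(y)$, the counts $M^+(y)$ and $M^0(y)$, and the randomization weight $a(y)$ are each locally constant, so $\phi_n$ is itself locally constant on $U$. Under the alternative $\mu \neq \mu_0$, each coordinate with $D_{ii} > 0$ places zero mass on the hyperplane $\{y_i = 0\}$, while any degenerate coordinate (if $D_{ii} = 0$) is a deterministic nonzero constant; thus $P(Y^* \in U) = 1$.

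Third, combining almost-sure continuity of $\phi_n$ at $Y^*$ with the boundedness $\phi_n \in [0,1]$, the continuous mapping theorem and bounded convergence yield $E_\mu[\phi_n(S_n)] \to E_\mu[\phi_n(Y^*)]$ for each fixed $\mu \neq \mu_0$. Part (a) then follows from Theorem \ref{theorem--unbiased}(b) applied to $Y^*$. For part (b), Theorem \ref{theorem--unbiased}(c) applied to $Y^*$ gives $E_\mu[\phi_n(Y^*)] \to 1$ as $|\mu - \mu_0| \to \infty$, and combining this with the inner CMT limit (in $n$, for each $\mu$) delivers the second claim.

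The only step requiring real care is the continuity argument, and in particular the observation that the randomization weight $a(Y)$ introduces no additional discontinuities beyond the sign-change hyperplanes $\{y_i = 0\}$. Everything else is bookkeeping: the symmetry and independence of the limit are immediate from diagonality of $D$, and the two parts of the conclusion are then direct translations of Theorem \ref{theorem--unbiased}(b) and \ref{theorem--unbiased}(c).
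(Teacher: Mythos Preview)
Your proposal is correct and follows exactly the route the paper takes: the paper simply states that the corollary ``follows immediately by the continuous mapping theorem,'' and you have supplied the details of that one-line argument---checking independence and symmetry of the Gaussian limit, almost-sure continuity of $\phi_n$ off the coordinate hyperplanes, and then invoking Theorem~\ref{theorem--unbiased}(b),(c).
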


\subsection{Size Distortion under Equicorrelation}

Here, we study size distortion of the sign test with normal, equicorrelated random variables that are wrongly assumed to be independent. This is a counterexample showing that over-rejection could be a consequence when practitioners using approximate sign test gets the cluster structure wrong.

\begin{assumption} \label{assumption--equicorrelation}
Let $Y \sim \text{N}(\mu, (1-\rho)\mathbf{I}_q + \rho \cdot \iota \iota^T)$ be equicorrelated normal random variables. 
\end{assumption}

Consider again test \ref{test_standard}. We show that under $H_0$, when $\mu = \mu_0$, the test will over-reject:
\begin{proposition}\label{theorem--equicorrelation}
	Given Assumption \ref{assumption--equicorrelation},  suppose that $\mu = \mu_0$, $\rho > 0$, $q > 1$ and $\alpha < 0.25$. Then $$E[\phi_{2,n}(Y)] > \alpha~.$$ 	Furthermore, as $\rho \to 1$, $E[\phi_n(Y)] \to 1$. 
\end{proposition}
In other words, the sign test could mistake correlation for violation of the null. Intuitively, the sign test checks for balance of the signs around the mean. If the variables are positively correlated, there will be an unusually large number of random variables with the same sign relative to independence, so that the test wrongly rejects. Our result is stated for the two-sided test, though the analogous result for the one-sided test follows straightforwardly. Our counterexample complements a large literature showing that the usual $t$- and Wald tests can experience significant size distortion when dependence is ignored (see for instance \citealt{bdm2004, cgm2008}).

To further understand size distortion under equicorrelation, we compute asymptotic power numerically. This is possible because the size (in particular equation \ref{equation--gb2009} in appendix A.2) can be expressed as a scalar integral that can be quickly and accurately computed using Gauss-Hermite quadratures. We evaluate the power of the test at $5\%$ and $10\%$ levels of significance using 1000 nodes. The results are presented in figures \ref{fig:powerrand5random} and \ref{fig:powerrand10random} respectively.

For the test at 5\%-level, we see that for the $\rho$'s considered, size starts at $5\%$ when the number of sub-clusters is 1. It then monotonically increases as we increase the number of sub-clusters. Furthermore, fixing the number of sub-clusters, the size of the test increases also increases monotonically in the correlation across the sub-clusters. The test at 10\%-level yields similar results. All in all, the numerical results concord with that of Proposition \ref{theorem--equicorrelation}. 

\begin{figure}[]
\centering
\includegraphics[width=1\linewidth]{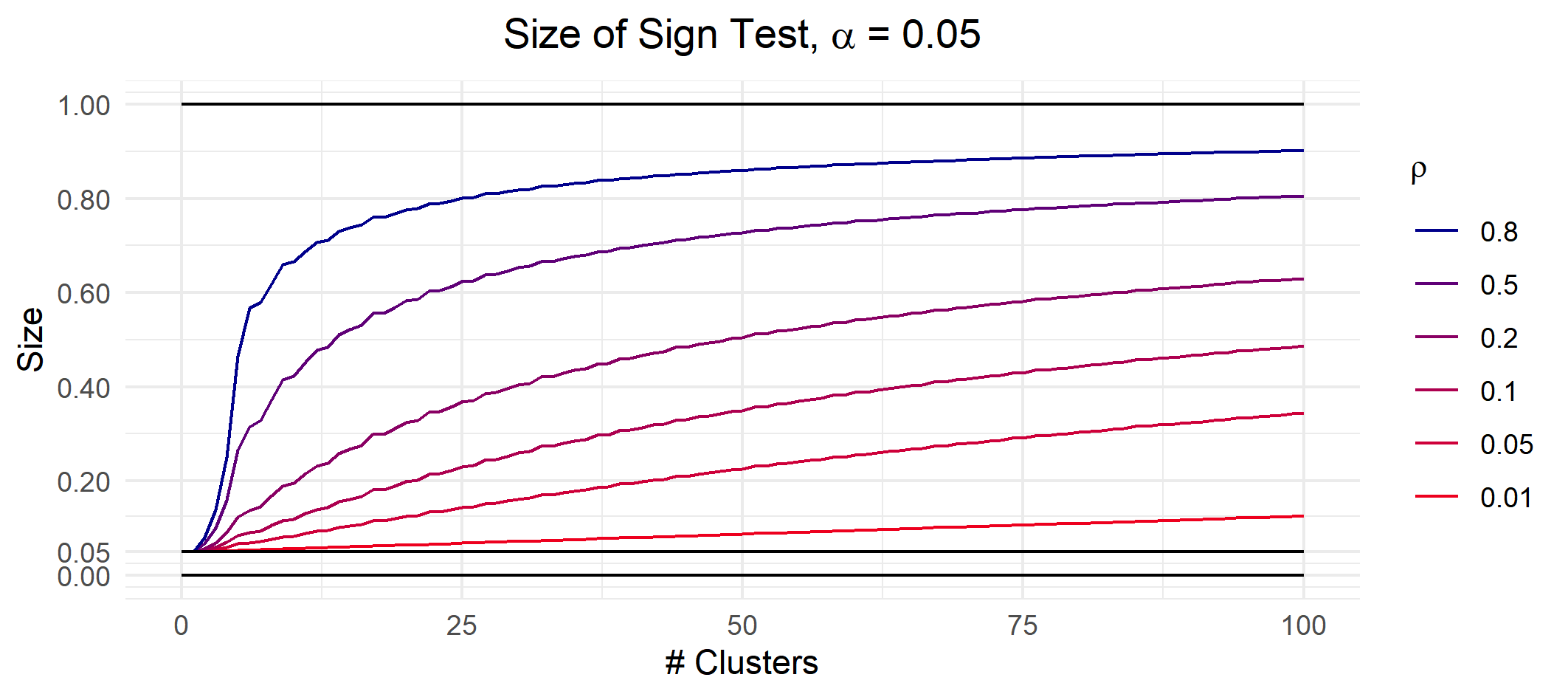}
\caption{Size of the 5\%-level permutation test with the equicorrelated model under various values of $\rho$. Over-rejection rate test increases monotonically in the number of sub-clusters and in $\rho$, the amount of correlation within each cluster. Power is evaluated using Gauss-Hermite quadratures with 1000 nodes.}
\label{fig:powerrand5random}
\end{figure}

\begin{figure}[]
\centering
\includegraphics[width=1\linewidth]{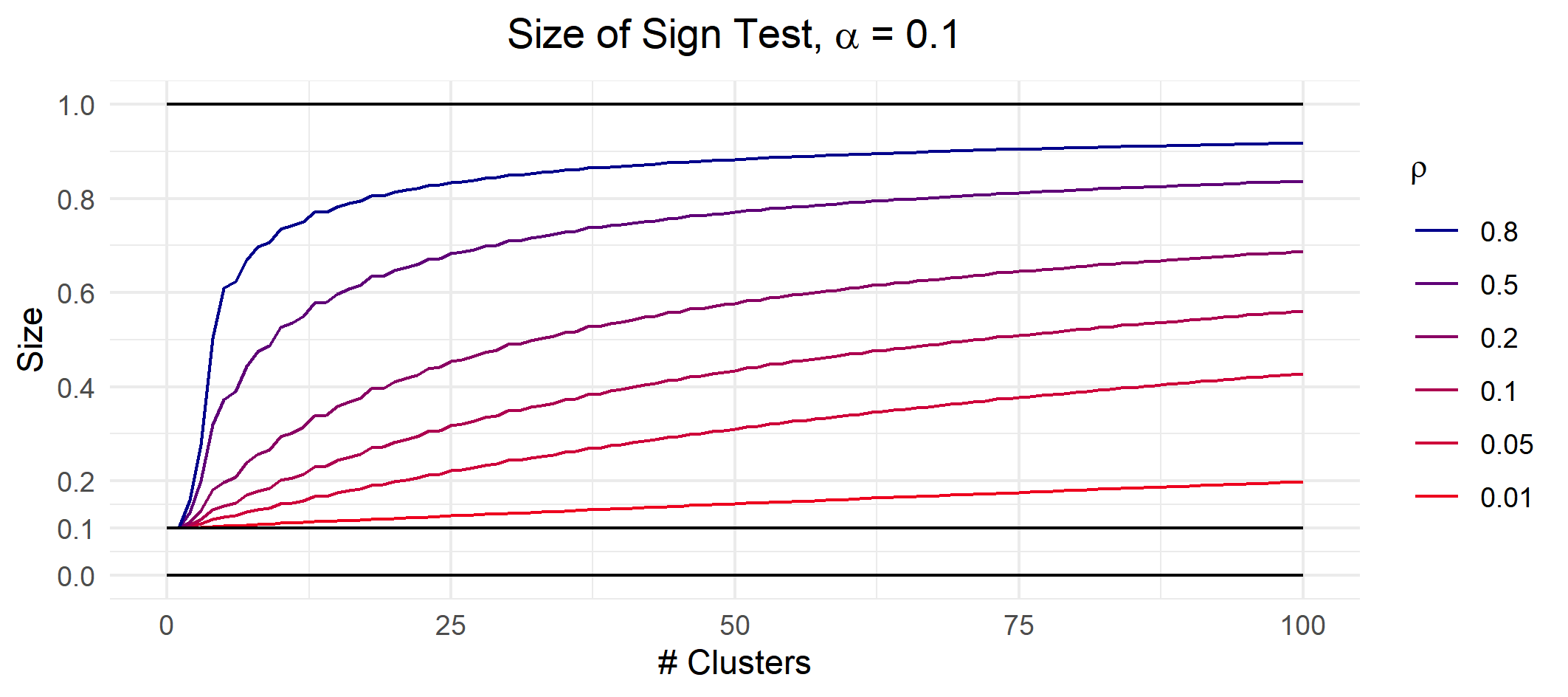}
\caption{Size of the 10\%-level permutation test with the equicorrelated model under various values of $\rho$. Over-rejection rate test increases monotonically in the number of sub-clusters and in $\rho$, the amount of correlation within each cluster. Power is evaluated using Gauss-Hermite quadratures with 1000 nodes.}
\label{fig:powerrand10random}
\end{figure}

\subsubsection{Size Distortion under Minimal Correlation}

Equicorrelation appears to be a strong assumption, since all of $Y$ is assume to be correlated. Suppose only the first two entries of $Y$ are correlated. In turns out that size distortion arises as well. 

\begin{assumption} \label{assumption--minimalcorrelation}
Let $Y \sim \text{N}(\mu, \Sigma)$, where 
\begin{equation}
	\Sigma = \begin{pmatrix}
	1	& \rho & 0 & \cdots & 0 \\
	\rho& 1 &  0 & \cdots & 0 \\
	0   & 0 & 1 & \cdots & 0 \\		
	\vdots& \vdots & \vdots & \ddots & \vdots \\
	0   & 0    & 0 & \cdots & 1 \\
	\end{pmatrix}~.
\end{equation}
\end{assumption}

Our earlier result also applies to this case: 
\begin{proposition}\label{theorem--minimalcorrelation}
	Given Assumption \ref{assumption--minimalcorrelation},  suppose that $\mu = \mu_0$, $\rho > 0$, $q > 1$ and $\alpha < 0.25$. Then $$E[\phi_{2,n}(Y)] > \alpha~.$$
\end{proposition}
  
\section{Conclusion}
We document two finite sample properties of the sign test. We show that it is unbiased against one- and two-sided alternatives, even with non-identically distributed observations, but can experience size distortion if the samples are positively correlated. In the context of approximate sign tests under fixed cluster asymptotics, the latter result shows that misspecifying the cluster structure will lead to over-rejection of the null. Our result can be a useful first step towards a more general theory for the finite sample properties of randomization tests. 

\hfill \\

\bibliographystyle{chicago}
\bibliography{clusters}

\clearpage

\appendix

\section*{Appendices}\label{appendix--proofs}

\section{Proofs}

We first state a simple proposition:
\begin{proposition} \label{proposition--generalpower}
	Suppose $2^{-(q-1)}\sum_{r=0}^{m-1} {q \choose r}  <  \alpha \leq 2^{-(q-1)}\sum_{r=0}^{m} {q \choose r} \leq \frac{1}{2}$, where $\sum_{r=0}^{m-1} {q \choose r}$ is understood to be $0$ if $m-1 < 0$. Then,
	\begin{equation*}
				\mathbb{E}[\phi_n] = \sum_{r = 0}^{m-1} P_Y(r) + \gamma P_Y(m)
	\end{equation*}
	where $\gamma = \frac{ \alpha2^{q-1} - \sum_{r \leq m-1} {q \choose r} }{{q \choose m}}$ and $P_Y(r)$ is the probability that $Y-\mu\cdot \iota$ has exactly $r$ positive or exactly $r$ negative realisations. 
\end{proposition}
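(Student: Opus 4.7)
My approach exploits the fact that $T(Y) = \lvert \sum_i \mathrm{sgn}(Y_i)\rvert$ depends on $Y$ only through the signs of its coordinates, so that the randomization distribution $\{T(gY)\}_{g \in \mathbf{G}}$ is data-free (assuming $Y_i \neq 0$ for all $i$). As a consequence, the order statistic $T^{(k)}(Y)$ and the quantities $M^+(Y), M^0(Y), a(Y)$ appearing in the definition of $\phi_n$ are all deterministic constants, and the proof reduces to combinatorial bookkeeping.

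First, write $T(gY) = \lvert 2R(g) - q\rvert$ where $R(g) = \lvert\{i : g_i\,\mathrm{sgn}(Y_i) = 1\}\rvert$. As $g$ ranges over $\mathbf{G} = \{-1,+1\}^q$, the map $g \mapsto g \cdot \mathrm{sgn}(Y)$ is a bijection of $\{-1,+1\}^q$, so $R(g) = r$ for exactly $\binom{q}{r}$ choices of $g$. Hence the multiset $\{T(gY): g \in \mathbf{G}\}$ equals $\{\lvert 2r - q\rvert\}_{r=0}^{q}$ with multiplicities $\binom{q}{r}$, which is independent of $Y$.

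Next, I would identify $T^{(k)}$. Multiplying the hypothesized bounds on $\alpha$ by $M = 2^q$ gives $2\sum_{r=0}^{m-1}\binom{q}{r} < \alpha M \leq 2\sum_{r=0}^{m}\binom{q}{r}$, and the constraint $\leq 1/2$ forces $m < q/2$ so that $q - 2m > 0$. Counting from the top of the ordered randomization distribution, exactly $2\sum_{r=0}^{m-1}\binom{q}{r}$ elements exceed $q - 2m$ and exactly $2\binom{q}{m}$ equal it; matching these counts against $M - k = \lceil(1-\alpha)M\rceil$'s complement pins down $T^{(k)}(Y) = q - 2m$, $M^+ = 2\sum_{r=0}^{m-1}\binom{q}{r}$, and $M^0 = 2\binom{q}{m}$. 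Substituting into $a(Y) = (M\alpha - M^+)/M^0$ returns $a(Y) = \gamma$.

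With these constants in hand,
\begin{equation*}
\mathbb{E}[\phi_n(Y)] = P(T(Y) > q - 2m) + \gamma\,P(T(Y) = q - 2m).
\end{equation*}
Letting $K = \lvert\{i : Y_i > 0\}\rvert$, the events $\{T(Y) > q - 2m\}$ and $\{T(Y) = q - 2m\}$ correspond to $\{K < m\} \cup \{K > q - m\}$ and $\{K = m\} \cup \{K = q - m\}$ respectively. Grouping $P(K = r) + P(K = q - r)$ into $P_Y(r)$ delivers the claimed formula. The main obstacle is the integer-arithmetic bookkeeping that relates $\lceil (1-\alpha)M\rceil$ to $\alpha M$, together with the boundary case $\alpha = 2^{-(q-1)}\sum_{r=0}^{m}\binom{q}{r}$, where $T^{(k)}$ actually drops to $q - 2m - 2$ while simultaneously $\gamma = 1$ makes the stated formula collapse correctly. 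Both issues are routine once the randomization distribution above is in hand.
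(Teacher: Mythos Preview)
Your proposal is correct and follows essentially the same route as the paper's proof: identify $T^{(k)}(Y)=q-2m$ from the hypothesis on $\alpha$, observe that certain rejection corresponds to fewer than $m$ positive or fewer than $m$ negative entries, specialize $a(Y)$ to $\gamma$, and sum. You supply more detail than the paper (in particular the explicit observation that the randomization distribution is data-free and the treatment of the boundary $\alpha = 2^{-(q-1)}\sum_{r\le m}\binom{q}{r}$, which the paper's sketch omits), but the argument is the same.
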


\begin{proof}
	Our randomization test rejects the null hypothesis with certainty if  ${T}(Y)  > {T}^{(k)}(Y)$. If ${q \choose m-1} 2^{-(q-1)} <  \alpha \leq {q \choose m} 2^{-(q-1)}$, $T^{(k)}(S) = q-2m$. Accordingly, $Y$'s with fewer than $m$ positive entries or fewer than $m$ negative entries are considered extreme relative to the randomization distribution. We reject with certainty whenever ${T}(Y) > (q-2m)$. When ${T}(Y)  = {T}^{(k)}(Y)$, we reject the null hypothesis with probability $a(Y)$, which specializes in this context to $\gamma = \frac{ \alpha2^q - \sum_{r \leq m-1} {q \choose r} }{{q \choose m}}$. Summing these probabilities we have the above formula.  
\end{proof}

\noindent Note that since we count both positive and negative realisations, WLOG, $m \leq q/2$. 

\subsection{Proof for Proposition \ref{theorem--unbiased}}

\subsubsection*{Two-Sided Case}

WLOG, we prove the result assuming that $\mu > \mu_0$, since the problem is symmetric around $\mu$. Then $p_i \geq \frac{1}{2}$ for all $i$. Suppose also for now that $q \geq 2m + 1$. 

Let $C(A, r)$ be all the possible ways of choosing $r$ elements of the set $A$ and let $$p_i = P(Y_i > -(\mu - \mu_0))~.$$ Then we can write:
\begin{align*}
	P_Y(r) = \sum_{c \in C([q], q-r)} \prod_{i \in c} p_i \prod_{i \notin c} (1-p_i) + \sum_{c \in C([q], r)} \prod_{i \in c} p_i \prod_{i \notin c} (1-p_i)
\end{align*}
where it is understood that $C([q], 0)$ is the empty set. This follows from simple counting. Next, consider their first partial derivative with respect to $p_j$. For $r \geq 1$, this is:
\begin{align*}
	\frac{\partial }{\partial p_j} P_Y(r) = & \sum_{c \in C([q] \setminus \{j\}, q-r-1)} \prod_{i \in c} p_i \prod_{i \notin c} (1-p_i) \quad - \sum_{c \in C([q] \setminus \{j\}, q-r)} \prod_{i \in c} p_i \prod_{i \notin c} (1-p_i) \\
	&  + \sum_{c \in C([q] \setminus \{j\}, r-1)} \prod_{i \in c} p_i \prod_{i \notin c} (1-p_i) \quad - \sum_{c \in C([q] \setminus \{j\}, r)} \prod_{i \in c} p_i \prod_{i \notin c} (1-p_i)
\end{align*}
For $r = 0$, it is simply:
\begin{align*}
	\frac{\partial }{\partial p_j} P_Y(0) & =  \sum_{c \in C([q] \setminus \{j\}, q-1)} \prod_{i \in c} p_i \prod_{i \notin c} (1-p_i)
	\quad - \sum_{c \in C([q] \setminus \{j\}, r)} \prod_{i \in c} p_i \prod_{i \notin c} (1-p_i)  \\
    & = \prod_{i \neq j} p_i - \prod_{i \neq j} (1-p_i)
\end{align*}
Suppose for now that $\gamma = 1$. Then, given the telescoping nature of the above terms, we can write:
\begin{align*}
	\frac{\partial }{\partial p_j} \sum_{r=0}^m P_Y(r)  = \sum_{c \in C([q] \setminus \{j\}, q-m-1)} \prod_{i \in c} p_i \prod_{i \notin c} (1-p_i) - \sum_{c \in C([q] \setminus \{j\}, m)} \prod_{i \in c} p_i \prod_{i \notin c} (1-p_i)
\end{align*}

We want to prove that the derivative is positive. Suppose for each $c \in C\left([q] \setminus \{j\} \,, \, q-m-1\right)$, we can find a $c' \in C\left([q] \setminus \{j\} \,, \, m\right)$ such that $i \notin c \Rightarrow i \not\in c'$. Then, using the fact that $p_i \geq 1- p_i$, we must have that
\begin{align*}
  \prod_{i \in c} p_i \prod_{i \notin c} (1-p_i) - \prod_{i \in c'} p_i \prod_{i \notin c'} (1-p_i) = & \prod_{i \notin c} (1-p_i) \left( \prod_{i \in c} p_i -  \prod_{i \in c' \cap c} p_i \prod_{i \in c, \, i \notin c'} (1-p_i)   \right) \\
 = & \begin{cases}
 > 0 & \mbox{ if at least one $p_i > 1/2$} \\
 = 0 & \mbox{ if } p_i = 0 \, \, \forall i
 \end{cases}~.
\end{align*}

Note that because $q \geq 2m +1$, for each $c$, there are ${q-m-1 \choose m } \geq 1$ potential $c'$. Similarly, for each $c'$, there are ${q-m-1 \choose m}$ potential $c$. Consider the (undirected) bipartite graph in which each $c \in C\left([q] \setminus \{j\} \,, \, q-m-1\right)$ is connected to eligible $c' \in C\left([q] \setminus \{j\} \,, \, m\right)$. Our problem is equivalent to finding a perfect matching in this ${q-m-1 \choose m}$-regular bipartite graph.

By Hall's Marriage Theorem, it suffices to check that for all $W \subseteq C\left([q] \setminus \{j\} \,, \, q-m-1\right)$, $|N(W)| \geq |W|$, where $N(W)$ is the set of 1-neighbours to $W$. Let $E(W)$ be the set of edges from $W$ to $N(W)$ and let $E(N(W))$ be the set of edges from $N(W)$ to $C\left([q] \setminus \{j\} \,, \, q-m-1\right)$. Next, note that $E(W) \subset E(N(W))$ by definition of $N(W)$. As such, we write:
\begin{align*}
	|E(W)| \leq E(N(W)) \Leftrightarrow {q-m-1 \choose m}|W| \leq {q-m-1 \choose m}|N(W)| \Leftrightarrow |W| \leq |N(W)|
\end{align*}
where the first equivalence follows from regularity of the graph. Hence, by \ref{proposition--generalpower}
\begin{align*}
\frac{\partial}{\partial \mu} E_\mu[\phi_n] & = \sum_{r = 0}^{m} \frac{\partial}{\partial \mu}P_Y(r)  =     \sum_{i=1}^q \frac{\partial p_i}{\partial \mu} \cdot \left(\sum_{r = 0}^{m} \frac{\partial}{\partial p_i}   P_Y(r)  \right)   \geq 0~.
\end{align*}
Here, the inequality is strict as long as there is at least one $i$ such that $p_i > \frac{1}{2}$ and $\frac{\partial p_i}{\partial \mu} > 0$. The strict inequality obtains if at least one random variable is continuous with unbounded support. 
Finally, we have that
\begin{align*}
	E_\mu[\phi_n] = E_{\mu_0}[\phi_n] + \int_{\mu_0}^{\mu}  \frac{\partial}{\partial \mu} E_u[\phi_n] \,    du \geq \alpha
\end{align*}
since the first term is $\alpha$ by exactness of the sign test and the second term is the integral of a non-negative function. 

Finally, note that as $\mu - \mu_0 \to \infty$, $p_i \to 1$. Hence $P_Y(r) \to 0$ for all $r \neq 0$. $P_Y(0) \to 1$.

Now suppose $\gamma > 0$. From \ref{proposition--generalpower}, we see that the power of test with $\gamma > 0$ is a linear combination of two tests each with power at least $\alpha$. Hence, we have that this test is unbiased as well. Furthermore, if either of the two test has power strictly greater than $\alpha$, so does the test with $\gamma > 0$. 

We have shown our result for $q \geq 2m + 1$. Note that $q \geq 2m$ by definition, since this is a 2-sided test. Suppose $q = 2m$. Then this is the trivial test that rejects 100\% of the time for a significance level of $0$. Our test can still have strictly increasing power, but this comes from the $(m-1)$ term. As such, the proposition continues to hold as long as $\alpha < 1$. 

\subsubsection*{One-Sided Case}

Suppose that $\mu > \mu_0$. Then the test statistic takes value:
\begin{equation*}
	S(Y) = \sum_{i=1}^q \text{sgn}\left\{ X_i -\mu_0\right\} \geq \sum_{i=1}^q \text{sgn}\left\{ X_i -\mu \right\} =: \tilde{S}(Y)
\end{equation*}
Now, $\tilde{S}(Y)$ is the test statistic for the correctly specified null hypothesis that median is $\mu$. Let the corresponding test by $\tilde{\phi}$. By Theorem \ref{theorem--exactness}, $E[\tilde{\phi}] = \alpha$. Recall that the randomization distribution is $\text{Binomial}(q, \frac{1}{2})$, regardless of the exact value of the null hypothesis. As such, 
\begin{align*}
S(Y) \geq S_\mu(Y) & \quad  \Rightarrow \quad  \phi_{1,n}(Y) \geq \tilde{\phi}  \\
& \quad  \Rightarrow \quad E_\mu[\phi_n(Y(\mu_0))] \geq E_\mu[\phi_n(Y(\mu))] = \alpha
\end{align*}

\subsection{Proof for Proposition \ref{theorem--equicorrelation}}
Given proposition \ref{proposition--generalpower}, we consider $P_Y(r)$. 

When $Y$ is an equi-correlated normal random variable, $P_Y$ depends only on the parameters $q$ and $\rho$. We denote it $P_{q, \rho}(r)$ for convenience. First note if $q = 2$, we know by explicit formula that the power is
	\begin{equation*}
		2 \gamma \mathbb{P}\left(\text{sgn}(Y) = (1, 1)\right) = \frac{\gamma}{2}\left[ 1 + \frac{2}{\pi} \arcsin \rho \right] > \frac{\gamma}{2} = \alpha~.
	\end{equation*}
	It is also immediate that power is strictly increasing in $\rho$.

	Now, suppose $q \geq 3$. Under assumption \ref{assumption--equicorrelation}, we have from equation (2.16) in \cite{gb2009} that for $\rho \geq 0$, 
	\begin{align}
		\frac{1}{2}P_{q, \rho}(r) & = {q \choose r}\int_{\mathbf{R}} \phi(y) \left( 1 - \Phi\left( -\frac{\sqrt{\rho}y}{\sqrt{1-\rho}} \right) \right)^r \Phi\left(-\frac{\sqrt{\rho}y}{\sqrt{1-\rho}} \right)^{q-r} dy \label{equation--gb2009} \\
			& = {q \choose r}\int_{\mathbf{R}} \phi(z) \left( 1 - \Phi\left(\tilde{\rho}z \right) \right)^r \Phi\left( \tilde{\rho} z \right)^{q-r} dz \mbox{ by symmetry of $\phi$ and setting } z = -y \nonumber \\
			& = {q \choose r}\int_{\mathbf{R}} \phi(z) \sum_{a=0}^r {r \choose a} (-1)^a \Phi\left(\tilde{\rho}z\right)^{q-r-a} dz \nonumber 
	\end{align}
	where $\tilde{\rho} = \frac{\sqrt{\rho}}{\sqrt{1-\rho}}$, with $\frac{d\tilde{\rho}}{d\rho} > 0$. Note again that as $\rho \to 1$, from equation \ref{equation--gb2009} we see that $P_{q, \rho}(r) \to 0$ if $r \neq 0$. On the other hand, $P_{q, \rho}(0) \to 1$
	
	To apply proposition (\ref{proposition--generalpower}), we first let $\gamma = 1$. Then,
	\begin{align*}
		& \mathbb{E}[\phi_n] = \sum_{r=0}^m P_{q, \rho}(r)  = 2 \int_{\mathbf{R}} \phi(z) \sum_{r=0}^m {q \choose r} \sum_{a=0}^r {r \choose a} (-1)^a \Phi\left(\tilde{\rho} z \right)^{q-r+a} dz
	\end{align*}
	The inner sum is:
	\begin{align*}
		& {q \choose 0}\left( {0 \choose 0} \Phi(\tilde{\rho}z)^q  \right) \\
		+ & {q \choose 1}\left( {1 \choose 0} \Phi(\tilde{\rho}z)^{q-1} - {1 \choose 1} \Phi(\tilde{\rho}z)^q  \right) \\
		+ & {q \choose 2}\left( {2 \choose 0} \Phi(\tilde{\rho}z)^{q-2} - {2 \choose 1} \Phi(\tilde{\rho}z)^{q-1} + {2 \choose 2} \Phi(\tilde{\rho}z)^q  \right) \\
		+ & ... \\
		+ & {q \choose m}\left( {m \choose 0} \Phi(\tilde{\rho}z)^{q-m} - {m \choose 1} \Phi(\tilde{\rho}z)^{q-m+1} + ... + {m \choose m} (-1)^{q} \Phi(\tilde{\rho}z)^q  \right)
	\end{align*}
	Observe that the coefficient of $\Phi(\tilde{\rho}z))^{q-r}$ is:
	\begin{equation*}
		\sum_{k = 0}^{m-r} {q \choose {r+k}}{ {r+k} \choose k } (-1)^k = \sum_{k = r}^{m} {q \choose k}{ k \choose k-r } (-1)^{k-r}
	\end{equation*}
	where the equality follows from replacing the index $k$ with $k+r$. Then, 
	\begin{align*}
		\sum_{k = r}^{m} {q \choose k}{ k \choose k-r } (-1)^{k-r} & = \frac{(-1)^{m-r} (m-r+1) }{q-r} {q \choose {m+1}} {{m+1} \choose {m-r+1}} \\
		& = \frac{(-1)^{m-r}}{q-r} {q \choose {m+1}} (m+1) {m \choose r}
	\end{align*}
	where the second step follows because $k {n \choose k} = n {{n-1 \choose {k-1}}}$ and ${m \choose {m-r}} = {m \choose r}$. The first step obtains by induction as follows. The base case with $m = r$ is immediate. For the induction step, we have that:
	\begin{align*}
		& \sum_{k = r}^{m+1} {q \choose k}{ k \choose k-r } (-1)^{k-r} \\
		& = \frac{(-1)^{m-r} (m-r+1) }{q-r} {q \choose {m+1}} {{m+1} \choose {m-r+1}} + {q \choose {m+1}}{ {m+1} \choose {m+1-r} } (-1)^{m+1-r} \\
		& = (-1)^{m-r} {q \choose {m+1}}{ {m+1} \choose {m+1-r} }\frac{m+1-q}{q-r} \\
		& = \frac{(-1)^{m+1-r}}{q-r}\frac{q-m-1}{m+1}{q \choose {m+1}} \frac{m+2-r}{m-r} { {m+1} \choose {m+1-r} } \\
		& = \frac{(-1)^{m+1-r} (m-r+2) }{q-r} {q \choose {m+2}} {{m+2} \choose {m-r+2}}~.
	\end{align*}
	This gives us the desired result. Substituting this into our expression for power, 
	\begin{align*}
		\sum_{r=0}^m P_{q, \rho}(r) &  = 2\int_{\mathbf{R}} \phi(z) \sum_{r=0}^m {q \choose r} \sum_{a=0}^r {r \choose a} (-1)^a \Phi\left(\tilde{\rho} z \right)^{q-r+a} dz \\
		& = 2\int_{\mathbf{R}} \phi(z) {q \choose {m+1}} (m+1) \sum_{r=0}^m   \frac{(-1)^{m-r}}{q-r} {m \choose r} \Phi(\tilde{\rho} z)^{q-r} dz \\
		& = 2\int_{0}^\infty \phi(z) {q \choose {m+1}} (m+1) \sum_{r=0}^m   \frac{(-1)^{m-r}}{q-r} {m \choose r} \Phi(\tilde{\rho} z)^{q-r} dz \\
		& \qquad + 2\int_{0}^\infty \phi(z) {q \choose {m+1}} (m+1) \sum_{r=0}^m   \frac{(-1)^{m-r}}{q-r} {m \choose r} \left(1-\Phi(\tilde{\rho} z)\right)^{q-r} dz  \\
		& = 2\int_{0}^\infty \phi(z) {q \choose {m+1}} (m+1) \sum_{r=0}^m   \frac{(-1)^{m-r}}{q-r} {m \choose r} \left(\Phi(\tilde{\rho} z)^{q-r} + \left(1-\Phi(\tilde{\rho} z)\right)^{q-r}  \right)dz
	\end{align*}
	where the third equality follows by a change of variable and the fact that $\Phi(-\tilde{\rho}z) = 1-\Phi(\tilde{\rho}z)$.
	Define the function:
	\begin{equation*}
		f(\theta) := \sum_{r=0}^m   \frac{(-1)^{m-r}}{q-r} {m \choose r} \theta^{q-r}
	\end{equation*}
	Suppose $q \geq 3$, we compute:
	\begin{align*}
		f'(\theta) & = \sum_{r=0}^m \frac{(-1)^{m-r}}{q-r} {m \choose r} (q-r)\theta^{q-r-1} \\
		& = \theta^{q-m-1} \sum_{r=0}^m (-1)^{m-r} {m \choose r} \theta^{m-r} = \theta^{q-m-1} (1-\theta)^m
	\end{align*}
	Using this expression, we can write:
	\begin{align*}
		\frac{\partial }{\partial \rho}\sum_{r=0}^m P_{q, \rho}(r) &  = 2\int_{0}^\infty \phi(z) {q \choose {m+1}} (m+1) \frac{\partial }{\partial \rho}\left(f(\Phi(\tilde{\rho}z))  + f(1-\Phi(\tilde{\rho}z))  \right)  dz \\
		& =  2\int_{0}^\infty z\phi(z)\phi(\tilde{\rho}z) \frac{d \tilde{\rho}}{d \rho}{q \choose {m+1}} (m+1)  \Phi(\tilde{\rho} z)^{m}\left(1-\Phi(\tilde{\rho} z)\right)^{m}\\
		& \qquad \qquad \qquad \qquad \qquad \qquad \cdot \left( \Phi(\tilde{\rho}z)^{q-2m-1} - \left(1-\Phi(\tilde{\rho}z)\right)^{q-2m-1}   \right)  dz \\
		& \geq 0  \mbox{ if } q - 2m - 1 > 0
	\end{align*}
	The inequality is strict if $\rho > 0$ and $q-2m-1>0$. This restriction requires that $\alpha < 1 - \frac{1}{2^{q-1}}{q \choose \lceil q/2 \rceil}$ It is easy to check that  $\lfloor 1 - \frac{1}{2^{q-1}}{q \choose \lceil q/2 \rceil} \rfloor$ attains its minimum value at $3/4$ when $q = 4$. Hence, the bound applies if $\alpha < 1/4$. This gives us that power is strictly increasing in $\rho$ for all $\alpha \leq 1/4$ and $q \geq 3$. Finally, to show the size distortion, note that
	\begin{align*}
		E_\rho[\phi_n] & = E_{0}[\phi_n] + \int_{0}^{\rho}  \frac{\partial }{\partial \rho}\sum_{r=0}^m P_{q, \rho}(r)    \,    du \\
		& > \alpha \quad \mbox{ for } \rho > 0.5
	\end{align*} 
	This is because $E_{0}[\phi_n] = \alpha$ by construction and the second term is an integral of a function that is almost everywhere strictly positive. 
	
	To complete the proof, note that when $\gamma \neq 1$, we are linearly interpolating $\sum_{r=0}^m P_{q, \rho}(r) $ and $\sum_{r=0}^{m-1} P_{q, \rho}(r)$. As such, our result continues to hold. 

\subsection*{Proof of Proposition \ref{theorem--minimalcorrelation}}

	To apply Proposition \ref{proposition--generalpower}, we compute $P_{q, \rho}(r)$ under assumption \ref{assumption--minimalcorrelation}. With $q = 2$, minimal correlation is the same as equi-correlation. Suppose $q \geq 3$. Then for $r \geq 2$, 
	\begin{align*}
		\frac{1}{2}P_{q, \rho}(r)  = & \left( \frac{1}{2} \right)^{r} \cdot \left( \frac{1}{2} \right)^{q-2-r} \cdot \frac{1}{4}\left(1 + \frac{2}{\pi} \arcsin \rho   \right) {q-2 \choose r} \\
		& + \left( \frac{1}{2} \right)^{r-1} \cdot \left( \frac{1}{2} \right)^{q-2-(r-1)} \cdot \frac{1}{4}\left(1 - \frac{2}{\pi} \arcsin \rho   \right) {q-2 \choose r-1} \cdot {2 \choose 1} \\
		& + \left( \frac{1}{2} \right)^{r-2} \cdot \left( \frac{1}{2} \right)^{q-2-(r-2)} \cdot \frac{1}{4}\left(1 + \frac{2}{\pi} \arcsin \rho   \right) {q-2 \choose r-2} \\
		= & \left( \frac{1}{2} \right)  ^q \bigg[ {q-2 \choose r} + {q-2 \choose r-1} \cdot {2 \choose 1}  + {q-2 \choose r-2}  \\
		& + \left( {q-2 \choose r} - {q-2 \choose r-1} \cdot {2 \choose 1}  + {q-2 \choose r-2} \right) \frac{2}{\pi} \arcsin \rho   \bigg] \\
		= & \left( \frac{1}{2} \right)  ^q \bigg[ {q \choose r} + \left( {q-2 \choose r} - {q-2 \choose r-1} \cdot {2 \choose 1}  + {q-2 \choose r-2} \right) \frac{2}{\pi} \arcsin \rho   \bigg]
	\end{align*}
	where we have simply enumerated the cases in which 0, 1 or 2 of the correlated sub-clusters land on heads respectively. Similarly, 
	\begin{align*}
		\frac{1}{2}P_{q, \rho}(1) & = \left( \frac{1}{2} \right)  ^q \bigg[ {q \choose 1} + \left( {q-2 \choose 1} - {q-2 \choose 0}\cdot {2 \choose 1} \right)\frac{2}{\pi} \arcsin \rho   \bigg] \\
		\frac{1}{2}P_{q, \rho}(0) & = \left( \frac{1}{2} \right)  ^q \bigg[ {q \choose 0} + {q-2 \choose 0}\frac{2}{\pi} \arcsin \rho   \bigg]
	\end{align*}
	By proposition \ref{proposition--generalpower}, we write:
	\begin{align*}
		& \mathbb{E}[\phi_n]  = \sum_{r = 0}^{m-1} P_{q, \rho}(r) + \gamma P_{q, \rho}(m) \\
		& = \left(\frac{1}{2}\right)^{q-1} \left[  \gamma {q \choose m} + \sum_{r = 0}^{m-1} {q \choose r}    \right]  \\ \tiny
		& + \left(\frac{1}{2}\right)^{q-1} \left(  \frac{2 \arcsin \rho}{\pi}  \right) \left( \gamma \left( {q \choose m}  - 4 {q-2 \choose m-1} \right) +      1 + {q-2 \choose 1} - 2 + \sum_{r = 3}^{m-1}   \left( {q \choose r}  - 4 {q-2 \choose r-1} \right)   \right)
	\end{align*}
	The first term above is $\alpha$ by construction. It remains to show that the second term is positive. To do, we first evaluate it when $\gamma = 1$. The sum is telescoping:
	\begin{align*}
	W(m) :& = \, 1 + {q-2 \choose 1} - 2 + \sum_{r = 3}^{m}   \left( {q \choose r}  - 4 {q-2 \choose r-1} \right) \\
		& = {q-2 \choose 0}   \\
		& + {q-2 \choose 1} - {q-2 \choose 0}  - {q-2 \choose 0} 	 \\
		& + {q-2 \choose 2} - {q-2 \choose 1}  - {q-2 \choose 1} 	  +{q-2 \choose 0}\\
	 	& + {q-2 \choose 3} - {q-2 \choose 2}  - {q-2 \choose 2} 	  +{q-2 \choose 1}\\
	 	& + \, ... \\
	 	& +{q-2 \choose m} - {q-2 \choose m-1}  - {q-2 \choose m-1} 	  +{q-2 \choose m-2}
	\end{align*}
Cancelling terms diagonally, we have that:
\begin{align*}
	W(m) & = {q-2 \choose m} - {q-2 \choose m-1} \\
	& = \frac{(q-2)!}{m! (q-m-2)!} - \frac{(q-2)!}{(m-1)!(q-m-1)!} \\
	& = \frac{(q-2)!}{(m-1)!(q-m-2)!}\left[\frac{1}{m} - \frac{1}{q-m-1}\right]
\end{align*}
which is positive if and only if $m \leq \frac{q-1}{2}$. Furthermore, the term is strictly positive if $m < \frac{q-1}{2}$. Since $\arcsin \rho$ is strictly increasing in $\rho$, we have that size is strictly increasing in $\rho$.  Our goal was to show that:
\begin{equation*}
(1-\gamma) W(m-1) + \gamma W(m) > 0
\end{equation*}
By the above argument, for $m \leq \frac{q-1}{2}$, we have that $W(m-1) \geq 0$, $W(m) \geq 0$. The desired inequality obtains. 

To show (b), observe that under our assumption,
\begin{equation*}
	P\left(S(Y) > \tilde{S}(Y) \, \big\lvert \, \text{sgn}(Y) \right) > 0
\end{equation*}
In particular, $S(Y) > \tilde{S}(Y)$ when $Y_i \in \{\mu_0, \mu\}$, which occurs with strictly positive probability when $Y_i$ is continuously distributed. 

As before, (c) follows straightforwardly from the fact that $P(\text{sgn}(Y_i) = 1) \to 1$ as $\mu \to \infty$. 

\end{document}